\newcommand{\hlin}{\operatorname{H}_{\operatorname{lin}}}
\newcommand{\bin}[3]{\operatorname{\mathbf{bin}}({#1}, {#2}, {#3})}
\newcommand{\lbin}[2]{\operatorname{\mathbf{lbin}}({#1}, {#2})}
\newcommand{\vbin}[2]{\operatorname{\mathbf{bin}}({#1}, {#2})}
\newcommand{\vlbin}[1]{\operatorname{\mathbf{lbin}}({#1})}
\newcommand{\probs}[2]{\operatorname{\mathbf{Pr}}_{{#1}}\left[{#2}\right]}
\newcommand{\prob}[1]{\probs{}{#1}}
\newcommand{\expects}[2]{\operatorname{\mathbf{E}}_{{#1}}\left[{#2}\right]}
\newcommand{\expect}[1]{\expects{}{#1}}
\newtheorem{lemma}{Lemma}
\newtheorem{theorem}{Theorem}
\newtheorem{corollary}{Corollary}
\title{Expected number of uniformly distributed balls in a most loaded bin using placement with simple linear functions}
\author{Martin Babka\thanks{Research supported by the Czech Science Foundation grant GA14-10799S.}}
\begin{document}

\maketitle

\begin{abstract}
We estimate the size of a most loaded bin in the setting when the balls are placed into the bins using a random linear function in a finite field.
The balls are chosen from a transformed interval. 
We show that in this setting the expected load of the most loaded bins is constant.

This is an interesting fact because using fully random hash functions with the same class of input sets leads to an expectation of $\Theta\left(\frac{\log m}{\log \log m}\right)$ balls in most loaded bins where $m$ is the number of balls and bins.

Although the family of the functions is quite common the size of largest bins was not known even in this simple case.
\end{abstract}

\section{Introduction}

Our basic task is to estimate estimate the size of a largest bin in a special case of the balls and bins model. This models simply means that the balls are randomly thrown into bins. The process of their placement is of a various study -- its randomness, independence and other properties lead to various bin sizes.
The most simple model is to use fully random functions or some kind of their approximation to place the balls. There is a plenty of results, i.e. estimates of bin sizes, for various placement processes.

When the balls are thrown independently at random to the bins the expected size of the largest bin is $\Theta\left(\frac{\log m}{\log \log m}\right)$.

One of the first results were shown by Carter and Wegman \cite{cw} and this model was used to design universal and perfect hashing. They showed that the expected size of a bin is a constant when the placement is done by the functions which we will refer to as simple linear functions. These functions are two-wise independent and thus achieve $O(\sqrt{m})$ expected size of a largest bin.

It is also possible to use functions with higher degrees of independence and obtain better bounds. There are lower bounds for on the speed of such functions, size needed to represent and the size of the largest bin and independence they achieve given by Siegel \cite{siegel}.

The need to improve the size of the largest bins lead to two-choice paradigm. Out of two bins, hence we use two functions, the balls is placed into the smaller one. In this model the size of the largest bin is $O(\log \log m)$ where $m$ is the number of balls and bins shown by Azar et al \cite{azar} and improved by V\"{o}cking \cite{vocking}.

Nowadays more complicated family of functions are studied in \cite{wieder}. The functions no longer rely on high degree of independence but are designed so that they achieve small largest bins even with high probability.

Our model exhibits the use of simple linear functions and the balls are chosen from an interval in $\mathbb{Z}_p$. Such model has a constant size of largest bins.

\section{Notation and definitions}
\label{sec:notation}
We refer to the set $\{0, \dots, k - 1\}$ as to $[k]$. 
In the whole text we assume that $p$ is a fixed prime. 
The set of chosen balls is denoted by $S \subset [p]$.
The number of bins is the same as the number of balls and is denoted by $m$, i.e. $|S| = m$.

For each pair $(a, b) \in [p]^2$ we define the function $h'_{a, b}$ as $h'_{a, b}(x) = (ax + b) \bmod p$ and the function $h_{a, b}$ as $h_{a, b}(x) = h'_{a, b}(x) \bmod m$.

The multiset of simple linear functions mapping $[p]$ to the range $[m]$ is denoted by $\hlin$ and is defined as $\hlin = \{h_{a, b} \mid a, b \in [p] \}$.
For a function $h \in \hlin$ we define the size of $i$-th bin as $\bin{h}{S}{i} = |S \cap h^{-1}(i)|$ and the maximal size of the bin as $\lbin{h}{S} = \max_{i \in [m]} \bin{h}{S}{i}$.

In the following text we fix the probability space to be formed by a uniform choice of $h \in \hlin$.
The notation $\vbin{S}{i}$ and $\vlbin{S}$ then refers to the random variables formed by the mentioned random uniform choice.

For an element $x$ we define the value $l(x, a, b) = \left\lfloor\frac{ax + b}{p}\right\rfloor$; that is how many ``leaps'' are created by applying the function $h_{a, b}$ on the element $x$ in the field $\mathbb{Z}_p$.

\section{Collision probability for three elements}

We first study the probability of collision of three arbitrary elements.
By collision of the elements we understand the event when all of the elements are mapped to the same element in $[m]$ by the randomly chosen linear function.

We fix three different elements $x, y, z \in [p]$ and we count the number of pairs $(a, b) \in [p]^2$ such that $|h_{a, b}(\{x, y, z\})| = 1$.

We start by simplifying to the case when $x = 0, y = 1$ and the third element $z = d$ for a suitable $d \in [p]$ such that $d > 1$ depending on the choice of $x, y, z$.

\begin{lemma}[Transformation lemma]
\label{lemma:transformation}
Let $x, y, z \in [p]$ be arbitrary different elements. Moreover assume that $i_x, i_y, i_z \in [m]$. Then there exist an element $d \in [p]$ such that
\[
\prob{h(x) = i_x, h(y) = i_y,  h(z) = i_z} = \prob{h(0) = i_x, h(1) = i_y, h(d) = i_z}.
\]
\end{lemma}
\begin{proof}
The idea of the proof is simple. We show that there is a one-to-one map between simple linear functions mapping $x, y, z$ to $i_x, i_y, i_z$ and simple linear functions transforming $0, 1, d$ to the same elements.

In the first part of the proof we observe that combining simple linear functions with a linear function in $\mathbb{Z}_p$ does not change the the probability space.
There is a single linear function transforming $0, 1$ to $x, y$ in $\mathcal{Z}_p$ which we refer to as $h'_{\alpha,\beta}$.
Finally we choose $d$ so that $h'_{\alpha, \beta}(d) = z$ and the proof is finished.

We show that the elements $x, y, z$ can be transformed to the elements $0, 1, d$ so that the probability of the mappings from the statement of the lemma remains the same.

Choose $\alpha, \beta \in [p]$ so that $\alpha \neq 0$.
Observe that the mapping $(\gamma, \delta) \mapsto (\alpha \gamma, \beta \gamma + \delta)$ is a one-to-one map on $[p]^2$.
If there is another pair $(\epsilon, \phi)$ such that $(\alpha \epsilon, \beta \epsilon + \phi) = (\alpha \gamma, \beta \gamma + \delta)$, then $\gamma = \epsilon$ and $\delta = \phi$. Thus the mapping is injective.
Also for arbitrary $(r, s) \in [p]^2$ the element $(\alpha^{-1}r, s - \beta\alpha^{-1}r)$ is mapped to $(\alpha \alpha^{-1}r, \beta r + s - \beta\alpha^{-1}r) = (r, s)$.

The compound function $h'_{a, b} \circ h'_{\alpha, \beta}$ is exactly equal to the function $h'_{\alpha a, \beta a + b}$; this also follows from the fact that the set of all linear functions in $\mathbb{Z}_p$ forms a group with the operation of compounding functions. 

Let $H' = \{h'_{a, b} \mid (a, b) \in [p]^2\}$.
From the previous we can conclude that the combination of a function $h' \in H'$ with a fixed function $h'_{\alpha, \beta}$ is a one-to-one map in the space of functions $H'$.
Also observe that the composition of a function $h_{a, b} \in \hlin$ with $h'_{\alpha, \beta}$ can not change the probability (count of the functions) of mapping arbitrary three elements to a their prescribed images.

There is also a single function $(\alpha, \beta) \in [p] ^ 2$, i.e. a single function $h'_{\alpha, \beta}$, transforming the elements $0$ and $1$ to $x$ and $y$ in the field $[p]$ without taking modulo $m$. It is the function $\beta = x$ and $\alpha = y - x$.
To prove the lemma we choose $d \in [p]$ such that $h'_{\alpha, \beta}(d) = z$, i.e. $d = \alpha ^ {-1}(z - \beta)$.
\end{proof}

Lemma \ref{lemma:transformation} shows that the probability properties, e.g. collision, mapping to the prescribed elements, for the elements $x, y, z$ are the same as for the elements $\{0, 1, d\}$ where $d$ comes from the previous lemma.

Next we estimate the collision probability for the elements $0, 1, d$.

\begin{lemma}[Probability of collision of three elements]
\label{lemma:probability-3-elements}
Let $d \in [p]$ be arbitrary element.
\[
\prob{|h(\{0, 1, d\})| = 1 } = \frac{1 + \max\left(1, \frac{p}{dm}\right)\left(1 + \frac{d}{m}\right)}{p}.
\]
\end{lemma}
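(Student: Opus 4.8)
The plan is to count directly the number $N$ of pairs $(a,b)\in[p]^2$ for which $h_{a,b}(0)$, $h_{a,b}(1)$ and $h_{a,b}(d)$ coincide, so that the desired probability is $N/p^2$; the two summands in the claimed numerator will correspond to the contributions of $a=0$ and of $a\neq 0$. First I would dispose of the degenerate slope $a=0$: here $h'_{a,b}$ is constant, all three images equal $b$, and every one of the $p$ choices of $b$ produces a collision. This contributes exactly $p$ to $N$, i.e. the leading $1/p$ term once divided by $p^2$.

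Next, for $a\neq 0$ I would make the collision condition explicit in terms of wraparound. Writing $s_1=a$ and $s_2=ad\bmod p$, one has $h'_{a,b}(1)=b+s_1-p\,\mathbf{1}[b\ge p-s_1]$ and $h'_{a,b}(d)=b+s_2-p\,\mathbf{1}[b\ge p-s_2]$, where the indicators record the leaps $l(1,a,b)$ and $l(d,a,b)$. Reducing modulo $m$ and comparing with $h'_{a,b}(0)=b$ shows that $0,1,d$ collide exactly when $s_1\equiv p\,\mathbf{1}[b\ge p-s_1]$ and $s_2\equiv p\,\mathbf{1}[b\ge p-s_2]$ hold modulo $m$. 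For a fixed $a$ each of the two congruences is satisfiable only if the corresponding $s_i$ is congruent to $0$ or to $p$ modulo $m$, and in that case it confines $b$ to a single half-line interval ($\{b<p-s_i\}$ or $\{b\ge p-s_i\}$). Hence the admissible $b$ form the intersection of two such intervals, whose length is given by a four-way case analysis on the residues of $s_1,s_2$ modulo $m$ (yielding $p-\max(s_1,s_2)$, $\min(s_1,s_2)$, or the one-sided differences $s_i-s_j$).

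It then remains to sum these interval lengths over all $a\neq 0$. I would organise the sum by the residue of $a$ modulo $m$ (which must be $0$ or $p\bmod m$ for $s_1$ to qualify) and by the residue of $s_2=ad\bmod p$ modulo $m$. The key device is to slice the range of $a$ according to the leap count $k=\lfloor ad/p\rfloor\in\{0,\dots,d-1\}$: on each slice $s_2=ad-kp$ is an honest linear function of $a$, so the extra congruence $s_2\equiv 0$ or $p\pmod m$ selects an arithmetic progression, and the number of admissible $a$ inside a slice of length about $p/d$ is about $\max(1,\tfrac{p}{dm})$. Summing over the roughly $d$ slices and weighting by the interval lengths (which range up to $p$ and contribute the factor $1+\tfrac{d}{m}$) should produce the term $p\max(1,\tfrac{p}{dm})(1+\tfrac{d}{m})$, and division by $p^2$ then gives the second summand.

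I expect the genuine difficulty to lie entirely in this last step: counting the slopes $a$ that simultaneously fall in a prescribed class modulo $m$ and send $ad\bmod p$ into a prescribed class modulo $m$. This is a doubly-modular condition — one congruence modulo $m$ on $a$, another modulo $m$ on the reduction of $ad$ modulo $p$ — and it is precisely here that the two regimes separated by the maximum appear, according to whether a leap slice of length $\approx p/d$ is longer or shorter than $m$, i.e. whether $dm<p$ or $dm\ge p$. Making the per-slice count and its weighting precise, rather than merely of the right order, is what will require care; the bookkeeping of the four interval-intersection cases and of the boundary slices $k=0$ and $k=d-1$ is routine by comparison.
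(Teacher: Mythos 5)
Your proposal is correct and follows essentially the same route as the paper's proof: both reduce the collision of $0,1,d$ to two congruences modulo $m$ governed by wraparound indicators (the paper's leaps $l(1,a,b)\in\{0,1\}$ and $l(d,a,b)\in[d]$ versus your $\mathbf{1}[b\ge p-s_i]$ and slices $k=\lfloor ad/p\rfloor$), slice the slope range into roughly $d$ intervals of length about $p/d$, and extract the factor $1+\frac{d}{m}$ from the number of admissible slices (selected by a congruence on the slice index modulo $m$, not from the weighting by $b$-interval lengths, which just contributes the factor $p$) and the factor $\max\left(1,\frac{p}{dm}\right)$ from the admissible values of $a$ per slice --- you merely interchange the roles of $a$ and $b$, fixing $a$ and counting the admissible $b$-intervals exactly where the paper fixes $b$ and bounds the count of $a$ uniformly. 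Like the paper's own argument, which literally proves only an upper bound (``is bounded by''), your count yields the stated formula up to constant rounding factors rather than the claimed exact equality, and that is all the later applications of the lemma use.
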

\begin{proof}
We count the number of functions $h \in \hlin$ such that $h(\{0, 1, d\}) = \{y\}$ for some $y \in [m]$.
For each $x \in [p]$ it holds that $l(x, a, b) \in [x]$ and $h(x) = (ax + b - l(x, a, b)p) \bmod m$.

Whenever the elements $0, 1$ and $d$ are mapped to the same element $y$ it must hold that $h(0) = h(1)$ and $h(0) = h(d)$. Hence
\begin{align*}
	b \bmod p \bmod m & = (a + b) \bmod p \bmod m \\
	b \bmod p \bmod m & = (da + b) \bmod p \bmod m. \\
\end{align*}
From which we obtain the following sequence of equations
\begin{align*}
	m & \mid a + b - l(1, a, b)p - b \\
	m & \mid da + b - l(d, a, b)p - b, \\
\end{align*}
\begin{align*}
	m & \mid a - l(1, a, b)p \\
	m & \mid da - l(d, a, b)p, \\
\end{align*}
\begin{align*}
	m & \mid da - dl(1, a, b)p \\
	m & \mid (dl(1, a, b) - l(d, a, b))p. \\
\end{align*}
Since $p$ is a prime we conclude the fact that $m \mid dl(1, a, b) - l(d, a, b)$.
We estimate the collision probabilities from the two statements following from the previous formulas:
\begin{align}
	m & \mid a - l(1, a, b)p \label{3-prob-1-statement} \\
	m & \mid (dl(1, a, b) - l(d, a, b))p. \label{3-prob-2-statement}
\end{align}

The statement (\ref{3-prob-2-statement}) roughly means that out of $d$ possible values for $l(d, a, b)$ only the $1 / m$ fraction may generate the collision of the three elements. Notice that for a fixed $l \in [d]$ it holds that $\{a \in [p] \mid l(d, a, b) = l\}$ equals is a subinterval of $[p]$.
From (\ref{3-prob-1-statement}) we can observe that only the $1 / m$ fraction from the possible values of $a$ lying in the appropriate intervals allowed by valid values of $l(d, a, b)$ are causing collisions.

For the rest of the proof fix the value of $b$. 
First, we show that the values of $a$ such that $l(d, a, b) = l \in [d]$ form disjoint intervals in $[p]$ each of size at most $\lceil p/d \rceil$.
Then we count the number of values $a$ in an interval causing collisions -- using (\ref{3-prob-1-statement}).
And finally we count the number of the valid intervals.

Let $l(d, a, b) = l$, then it holds that $l \leq \frac{da + b}{p} < l + 1$. Immediately we get that $a \in \left[\frac{pl - b}{d}, \frac{p(l + 1) - b}{d}\right) \cap \mathcal{Z}$. The total number of values of $a$, i.e. integers, in each valid interval is at most $\lceil p / d \rceil$. The ceiling must be applied. For example assume an interval of length of 1.5 starting at point 0.8 -- it contains two integer points 1 and 2. This happens whenever $\frac{pl - b}{d}$ is an integer.

Now fix the value $l \in [d]$ such that $l(d, a, b) = l$.
In order to estimate the number of values of $a$ causing the collisions we split into two cases according to the value of $l(1, a, b)$.

\subparagraph{The first case, $l(1, a, b) = 0$.} 
From the two previous statements we conclude that
\begin{align*}
	m & \mid a \\
	m & \mid l. \\
\end{align*}

\subparagraph{The second case, $l(1, a, b) = 1$.}
As in the first case it must hold that
\begin{align*}
	m & \mid a - p \\
	m & \mid d - l. \\
\end{align*}

In both cases, there are at most $\lceil d/m \rceil$ values of $l$ satisfying the second condition.
Also for each satisfying value of $l$ there are at most $\lceil \lceil\frac{p}{d}\rceil / m \rceil$ values of $a$ causing the collision.

In both cases and for each $b$ it holds that the probability of collision of the three elements is bounded by
\[
\frac{\left(1 + \frac{1 + \frac{p}{d}}{m}\right)\left(1 + \frac{d}{m}\right)}{p} =
\begin{cases}
	\frac{1}{m^2} + \frac{1}{dm} + O\left(p^{-1}\right) & \mbox{if } p/dm \geq 1 \\
	\frac{1}{m^2} + \frac{d}{pm} + O\left(p^{-1}\right) & \mbox{otherwise, i.e. } \frac{p}{m} \leq d < p.
\end{cases}
\]
\end{proof}

The worst possible case is for $d = 2$ and the probability is roughly $1/2m$. 
When $d \geq p/m$, the formula is a great overestimate as shown in Figure~1.

\begin{figure}[h]
	\label{fig:probability-3}
	\centering
	\includegraphics[width=8cm]{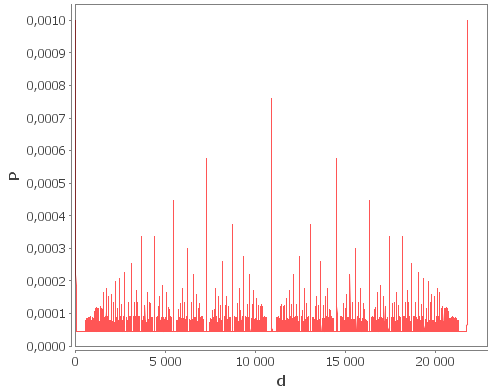}
	\caption{The function of probability of collision of the elements $0, 1, d$ with respect to $d$. Notice that the probability is decreasing in the part when $d \leq p / m$ and is almost symmetric. In this figure $m = 512$ and $p = 21787$.}
\end{figure}

\begin{corollary}
\label{co:d-elements}
Let $d < p / m$. Then $\prob{|h([d])| = 1} \leq \frac{1}{(d - 1) m} + 1/m^2 + O(p^{-1})$.
\end{corollary}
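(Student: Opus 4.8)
The plan is to reduce the $d$-way collision to the three-element collision already analysed in Lemma~\ref{lemma:probability-3-elements}, exploiting the obvious monotonicity of collision events. First I would observe that if every element of $[d] = \{0, 1, \dots, d-1\}$ lands in the same bin, then in particular the three elements $0$, $1$ and $d - 1$ collide. Hence the event $\{|h([d])| = 1\}$ is contained in the event $\{|h(\{0,1,d-1\})| = 1\}$, which immediately yields
\[
\prob{|h([d])| = 1} \leq \prob{|h(\{0,1,d-1\})| = 1}.
\]

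Next I would invoke Lemma~\ref{lemma:probability-3-elements} with the third element taken to be $d - 1$, giving
\[
\prob{|h(\{0,1,d-1\})| = 1} = \frac{1 + \max\left(1, \frac{p}{(d-1)m}\right)\left(1 + \frac{d-1}{m}\right)}{p}.
\]
The hypothesis $d < p/m$ is precisely what makes the maximum collapse: it gives $(d-1)m < dm < p$, hence $\frac{p}{(d-1)m} > 1$ and $\max(1, \frac{p}{(d-1)m}) = \frac{p}{(d-1)m}$. Substituting this value and cancelling the factor $p$ in the numerator turns the right-hand side into $\frac{1}{p} + \frac{1}{(d-1)m} + \frac{1}{m^2}$, which is exactly the claimed bound once $\frac{1}{p}$ is absorbed into the $O(p^{-1})$ term.

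The computation is entirely routine, so there is no genuine obstacle here; the only point requiring a little care is the boundary case $d = 2$, where $[d] = \{0, 1\}$ contains only two elements and the ``third element'' $d - 1 = 1$ coincides with $1$, so the three-element reduction does not literally apply. In that case I would instead bound $\prob{|h([2])| = 1}$ directly by the pairwise collision probability of $0$ and $1$. Since the images $h'(0)$ and $h'(1)$ are uniform and independent over $[p]$, this probability is at most $\lceil p/m\rceil / p \leq \frac{1}{m} + O(p^{-1})$, which already matches $\frac{1}{(d-1)m} + \frac{1}{m^2} + O(p^{-1})$ at $d = 2$. For every $d \geq 3$ the three elements $0$, $1$, $d - 1$ are genuinely distinct and the reduction above applies verbatim, completing the argument.
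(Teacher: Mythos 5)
Your proof is correct and follows essentially the same route as the paper: the collision of all of $[d]$ implies the collision of a fixed triple, whose probability is then read off from Lemma~\ref{lemma:probability-3-elements}, with the hypothesis $d < p/m$ collapsing the maximum. You are in fact slightly more careful than the paper itself, which writes the triple as $\{0, 1, d\}$ even though $d \notin [d]$ --- your choice $\{0, 1, d-1\}$ is the one that actually matches the stated $\frac{1}{(d-1)m}$ term --- and you additionally handle the degenerate case $d = 2$, which the paper's argument silently skips.
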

\begin{proof}
When all the elements from $[d]$ collide, then the elements $\{0, 1, d\}$ must collide as well. The probability of the collision of $\{0, 1, d\}$ is hence a valid upper bound on the probability of the collision of the whole interval. The statement is then a direct application of Lemma~\ref{lemma:probability-3-elements}.
\end{proof}

For completeness we just show a simple fact that our probability estimate is tight when we have a stronger assumption, namely we assume $p > 3m^2$.

\begin{lemma}
\label{lm:0-d-prob-lower-bound}
If $d \leq m$ and $p > 3m^2$, then $\prob{|h(\{0, 1, \ldots, d - 1\})| = 1} = \Omega\left(\frac{1}{dm}\right).$
\end{lemma}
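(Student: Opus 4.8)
The plan is to produce an explicit, easily enumerated family of pairs $(a,b) \in [p]^2$ that force every element of $\{0, 1, \ldots, d-1\}$ into a single bin, to bound the size of this family from below, and to divide by $p^2 = |[p]^2|$. Because we want only a lower bound, it is enough to capture a convenient subfamily of all colliding pairs rather than to describe them all. The case $d = 1$ is trivial, so assume $d \geq 2$.

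First I would isolate a clean sufficient condition for a collision. Suppose $a$ is a multiple of $m$, say $a = mt$, and that no wraparound modulo $p$ occurs across the interval, i.e. $a(d-1) + b \leq p - 1$; then $h'_{a,b}(x) = ax + b$ for every $x \in \{0, \ldots, d-1\}$, so $h(x) = (mtx + b) \bmod m = b \bmod m$ does not depend on $x$, whence $|h(\{0, \ldots, d-1\})| = 1$. This reduces the problem to counting the pairs $(mt, b)$ satisfying the inequality $mt(d-1) + b \leq p - 1$.

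Next I would carry out the count. Restricting to those $t \geq 0$ with $mt(d-1) \leq p/2$, each such $t$ admits at least $p/2$ values of $b$ (namely $0 \leq b \leq p - 1 - mt(d-1)$), and the number of admissible $t$ is at least $\frac{p}{2m(d-1)}$. Multiplying the two counts yields at least $\frac{p^2}{4m(d-1)} \geq \frac{p^2}{4md}$ colliding pairs, all distinct since $t$ is recovered from $a$ and $b$ ranges freely for each fixed $a$. Dividing by $p^2$ gives $\prob{|h(\{0,1,\ldots,d-1\})| = 1} \geq \frac{1}{4md} = \Omega\!\left(\frac{1}{dm}\right)$. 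The hypotheses enter exactly here: with $d \leq m$ we have $2m(d-1) < 2m^2 < 3m^2 < p$, so that $\left\lfloor \frac{p}{2m(d-1)} \right\rfloor \geq 1$ and at least one genuine multiple $a = m$ fits below $p$ with room to spare; moreover $d \leq m < p/m$, placing us in the regime where the matching upper bound of Corollary~\ref{co:d-elements} applies, so that the two bounds together pin the probability at $\Theta\!\left(\frac{1}{dm}\right)$.

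The step I expect to need the most care is choosing the sufficient condition so that it is both trivial to enumerate and rich enough to recover the full order $\Omega\!\left(\frac{1}{dm}\right)$: restricting to $a = 0$ alone would only give $\Omega(1/p)$, so it is essential to let $t$ range over roughly $\frac{p}{m(d-1)}$ distinct multiples of $m$. Beyond that, the remaining work is purely integer bookkeeping — checking that the floors in the counts of admissible $t$ and of admissible $b$ cost only constant factors and do not erode the order of magnitude, and confirming that the enumerated pairs are genuinely distinct so that no overcounting inflates the estimate.
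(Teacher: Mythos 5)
Your proposal is correct and matches the paper's proof in essence: both count exactly the pairs with $m \mid a$ and no wraparound modulo $p$ across $\{0,\ldots,d-1\}$ (so that every element lands in bin $b \bmod m$), and divide by $p^2$. You merely swap the order of the double count --- fixing $a = mt$ and counting admissible $b$, with the truncation $mt(d-1) \leq p/2$, instead of the paper's fixing $b$ and summing $\lfloor (p-b)/dm \rfloor$ over $b$ --- yielding the same $\Omega\left(\frac{1}{dm}\right)$ bound up to the constant.
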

\begin{proof}
For a fixed $b$, if $a < (p - b)/d$ and $m \mid a$, then the elements $\{0, 1, \ldots, d - 1\}$ collide.
For each $b$ there are at least $\lfloor (p - b)/dm \rfloor$ such values of $a$.

We conclude that the number of pairs $(a, b)$ making the elements collide is at least
\[
\sum_{b \in [p]} \frac{p - b}{dm} - 1 = \frac{(p + 1)p}{2dm} - p \geq \frac{p ^ 2 - 2pdm}{2dm} \geq \frac{p^2}{6dm}.
\]

Thus the resulting probability is at least $\frac{1}{6dm}$.
\end{proof}

\section{The expected size of most loaded bins}

First we study the role of the parameter $b$ in the hash function $h_{a, b}$.

The following lemma states that the effect of $b$ on $\vlbin{S}$ is not asymptotic since it more or less only shifts the largest bin.
\begin{lemma}
\label{lm:b-zero}
Assume that  $a, b \in [p]$ and $S \subseteq [p]$. Then \[ \frac{1}{2} \lbin{h_{a, b}}{S} \leq \lbin{h_{a, 0}}{S} \leq 2\lbin{h_{a, b}}{S} . \]
\end{lemma}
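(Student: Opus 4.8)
The plan is to exploit the fact that $h_{a,b}$ and $h_{a,0}$ differ only by a cyclic shift of their $[p]$-values before the reduction modulo $m$, and to show that this shift scatters the contents of any single bin of one function into at most two bins of the other. First I would record the identity $h'_{a,b}(x) = (h'_{a,0}(x) + b) \bmod p$, so that writing $v = h'_{a,0}(x)$ we have $h_{a,0}(x) = v \bmod m$ and $h_{a,b}(x) = ((v + b)\bmod p)\bmod m$. The only source of asymmetry is the wrap-around at $p$: adding $b$ leaves the values in $[0,\,p-b)$ unshifted modulo $p$, but subtracts $p$ from the values in $[p-b,\,p)$.

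Next I would split $[p]$ into the two intervals $I_1 = [0,\,p-b)$ and $I_2 = [p-b,\,p)$. For $x$ with $v \in I_1$ the bin under $h_{a,b}$ is $(v+b)\bmod m$, so such an $x$ lands in bin $j$ of $h_{a,b}$ exactly when $v \equiv j-b \pmod m$, i.e. when $x$ lands in bin $(j-b)\bmod m$ of $h_{a,0}$. For $x$ with $v \in I_2$ the bin is $(v+b-p)\bmod m$, so $x$ lands in bin $j$ of $h_{a,b}$ exactly when it lands in bin $(j-b+p)\bmod m$ of $h_{a,0}$. The key observation is that these two source residues modulo $m$ differ by $p \bmod m$ and may therefore be genuinely distinct.

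Since every element belongs to exactly one of $I_1, I_2$, the elements of the $j$-th $h_{a,b}$-bin fall, under $h_{a,0}$, into the two fixed bins $(j-b)\bmod m$ and $(j-b+p)\bmod m$, and these two contributions are disjoint. Hence $\bin{h_{a,b}}{S}{j} \le \bin{h_{a,0}}{S}{(j-b)\bmod m} + \bin{h_{a,0}}{S}{(j-b+p)\bmod m} \le 2\lbin{h_{a,0}}{S}$, and taking the maximum over $j$ gives $\lbin{h_{a,b}}{S} \le 2\lbin{h_{a,0}}{S}$, which is the left inequality of the statement. For the right inequality I would run the identical argument with the roles of the two functions exchanged, using $h'_{a,0}(x) = (h'_{a,b}(x) + (p-b))\bmod p$; this merely replaces the shift $b$ by $p-b$ and yields $\lbin{h_{a,0}}{S} \le 2\lbin{h_{a,b}}{S}$.

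The main obstacle is the bookkeeping around the wrap-around: one must check that the two candidate source bins are fully determined by the single residue class together with interval membership, and that no element is counted twice (which is guaranteed by $I_1 \cap I_2 = \emptyset$). The degenerate cases $b = 0$ (where $I_2$ is empty and the bound is an equality) and $a = 0$ (where both functions are constant) are then subsumed automatically, and everything else reduces to routine modular arithmetic.
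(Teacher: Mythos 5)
Your proof is correct and takes essentially the same approach as the paper's: both rest on the observation that $h_{a,b}$ and $h_{a,0}$ differ by a cyclic shift by $b$ of the values before reduction modulo $m$, whose single wrap-around at $p$ splits each bin of one function into at most two shifted bins of the other, namely $(j-b)\bmod m$ and $(j-b+p)\bmod m$. The only cosmetic difference is that you derive the second inequality by rerunning the argument with the shift $p-b$, while the paper instead notes directly that each bin of $h_{a,0}$ can draw elements from at most two bins of $h_{a,b}$.
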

\begin{proof}
Let $L \subseteq S$ be elements of bin $y$, i.e. $h_{a, b}(L) = y$.
For each $x \in L$ we have that 
\begin{align*}
h_{a, 0}(x) 
	& = ax \bmod p \bmod m \\ 
	& = (ax + b - b) \bmod p \bmod m \\ 
	& = \begin{cases}
((ax + b) \bmod p - b \bmod p)\bmod m & \mbox{ if } (ax + b) \bmod p \geq b \\
(p + (ax + b) \bmod p - b \bmod p) \bmod m & \mbox{ otherwise.} \\
\end{cases}
\end{align*}

Notice that the two possible new bins are either $(y - b) \bmod m$ or $(p + y - b) \bmod m$.
The lemma now follows from the following two observation.
First each original bin is either shifted and keeps its size or is split into two possibly uneven shifted bins -- hence $\frac{1}{2} \lbin{h_{a, b}}{S} \leq \lbin{h_{a, 0}}{S}$.
And notice that each new bin can only contain elements from at most two different original bins and thus $\lbin{h_{a, 0}}{S} \leq 2\lbin{h_{a, b}}{S}$.
\end{proof}

For completeness let us mention that the change of the sign of $a$ has almost no effect on $\vlbin{S}$.
\begin{lemma}
\label{lemma:sign-a}
Assume that  $a \in [p]$ and $S \subseteq [p]$ such that $0 \not\in S$. Then \[ \lbin{h_{a, 0}}{S} = \lbin{h_{p - a, 0}}{S} . \]
\end{lemma}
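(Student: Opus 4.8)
The plan is to show that $h_{a,0}$ and $h_{p-a,0}$ induce exactly the same multiset of bin sizes on $S$, from which equality of the maxima is immediate. Concretely, I would exhibit a bijection $\sigma$ on the set of bin indices $[m]$ with the property that $\bin{h_{p-a,0}}{S}{\sigma(i)} = \bin{h_{a,0}}{S}{i}$ for every $i \in [m]$. The bijection comes from the elementary field identity $(p-a)x \equiv -(ax) \pmod{p}$, which says that flipping $a$ to $p-a$ reflects the intermediate residue $ax \bmod p$ to its negative.

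First I would dispose of the degenerate case $a = 0$ (where $p-a$ should be read as $0$ modulo $p$): then both functions send all of $S$ to bin $0$ and the claim is trivial. So assume $a \neq 0$. For $x \in S$ set $r_x = ax \bmod p$. The hypothesis $0 \notin S$ forces $x \neq 0$, and since $a \neq 0$ as well, $r_x \in \{1, \dots, p-1\}$ is nonzero. This is exactly where $0 \notin S$ is used, and it is the delicate step: because $r_x \neq 0$, the negation becomes the exact reflection $(p-a)x \bmod p = (-r_x) \bmod p = p - r_x$, with no wrap-around exception. Had $x = 0$ been allowed we would instead get $r_x = 0$ and $p - r_x = p$, breaking the correspondence.

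Next I would reduce modulo $m$. Writing $c = p \bmod m$, we have $h_{a,0}(x) = r_x \bmod m$ and $h_{p-a,0}(x) = (p - r_x) \bmod m = (c - r_x) \bmod m$. Since $r_x \equiv h_{a,0}(x) \pmod m$, this rearranges to $h_{p-a,0}(x) = (c - h_{a,0}(x)) \bmod m$. Thus defining $\sigma(i) = (c - i) \bmod m$, every element landing in bin $i$ under $h_{a,0}$ lands in bin $\sigma(i)$ under $h_{p-a,0}$.

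Finally, $\sigma$ is a bijection of $[m]$ (a reflection, hence injective and therefore onto), so it merely relabels bins: $\bin{h_{p-a,0}}{S}{\sigma(i)} = \bin{h_{a,0}}{S}{i}$ for all $i$. Taking the maximum over $i$ on both sides gives $\lbin{h_{p-a,0}}{S} = \lbin{h_{a,0}}{S}$. I do not anticipate a serious obstacle here; the only point requiring care is verifying that reducing the exact reflection $p - r_x$ modulo $m$ yields a genuine bijection on bins rather than merging them, which is guaranteed precisely because $\sigma$ is invertible on $[m]$.
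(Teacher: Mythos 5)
Your proposal is correct and follows essentially the same route as the paper: both rest on the identity $(p-a)x \bmod p = p - (ax \bmod p)$ for $x \neq 0$ (where $0 \notin S$ is used), and both conclude that the bin $y$ under $h_{a,0}$ is relabeled to $(p-y) \bmod m$ under $h_{p-a,0}$, your $\sigma(i) = (c - i) \bmod m$ being exactly this map. You are merely more explicit than the paper in verifying that $\sigma$ is a bijection and in disposing of the degenerate case $a = 0$ (which the paper glosses over, since its identity also needs $a \neq 0$), which is a welcome tightening rather than a different argument.
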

\begin{proof}
Similarly as in the proof of the previous lemma. Let $L \subseteq S$ be elements of bin $y$, i.e. $h_{a, 0}(L) = y$.
Let $x \in L$, then $h_{p - a, 0}(x) = (p - a)x \bmod p \bmod m = (p - ((ax) \bmod p))\bmod m = (p - y) \bmod m$.
Observe that $(p - a)x \bmod p = p - ((ax) \bmod p)$ holds only when $x \neq 0$.
The bin $y$ is thus moved to the bin $(p - y) \bmod m$ and the lemma holds.
\end{proof}

Obviously allowing zero makes only a negligible change.
\begin{corollary}
Let $S \subseteq [p]$, then
\[
\lbin{h_{a, 0}}{S} -1 \leq \lbin{h_{p - a, 0}}{S} \leq \lbin{h_{a, 0}}{S} + 1.
\]
\end{corollary}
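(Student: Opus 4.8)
The plan is to reduce the general case back to Lemma~\ref{lemma:sign-a} by deleting the single element that is not covered by that lemma, namely $0$. Set $S' = S \setminus \{0\}$. Then $0 \notin S'$, so Lemma~\ref{lemma:sign-a} applies verbatim and gives the exact equality $\lbin{h_{a,0}}{S'} = \lbin{h_{p-a,0}}{S'}$. Since $S$ and $S'$ differ by at most one element, the maximal bin sizes of $S$ and of $S'$ under any one fixed function can differ by at most one; the whole corollary will then follow by transferring this equality across the two sets and chaining the resulting inequalities.

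The key step is to make the ``differ by at most one element'' observation precise. For any fixed $h \in \hlin$ and any bin $i \in [m]$ we have $\bin{h}{S}{i} = |S \cap h^{-1}(i)|$, and passing from $S'$ to $S$ either leaves every bin untouched (if $0 \notin S$) or adds the single element $0$ to exactly one bin, increasing that bin's count by one and changing no other. Hence for every $h$,
\[
\lbin{h}{S'} \leq \lbin{h}{S} \leq \lbin{h}{S'} + 1 .
\]
I would instantiate this once with $h = h_{a,0}$ and once with $h = h_{p-a,0}$.

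Chaining the two instances through the equality of Lemma~\ref{lemma:sign-a} yields
\[
\lbin{h_{a,0}}{S} \leq \lbin{h_{a,0}}{S'} + 1 = \lbin{h_{p-a,0}}{S'} + 1 \leq \lbin{h_{p-a,0}}{S} + 1 ,
\]
which rearranges to the left inequality $\lbin{h_{a,0}}{S} - 1 \leq \lbin{h_{p-a,0}}{S}$, and the symmetric chain with the roles of $a$ and $p - a$ interchanged gives the right inequality. There is essentially no hard obstacle here — the argument is bookkeeping around a single deleted element — and the only case worth keeping in mind is $0 \notin S$, where $S' = S$ and everything collapses directly to Lemma~\ref{lemma:sign-a}, comfortably inside the claimed window.
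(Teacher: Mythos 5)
Your proposal is correct and matches the paper's intent exactly: the paper leaves this corollary unproved beyond the remark that ``allowing zero makes only a negligible change,'' and your argument---delete $0$ to obtain $S' = S \setminus \{0\}$, apply Lemma~\ref{lemma:sign-a} to get $\lbin{h_{a,0}}{S'} = \lbin{h_{p-a,0}}{S'}$, and observe that removing one element changes any maximal bin size by at most one---is precisely that remark made rigorous. The chaining of inequalities is sound, so nothing further is needed.
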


For the choice of $S = [m]$ we show that the expected size of a most loaded bin is within $O(1)$. 
This can be compactly formulated as follows.
\begin{theorem}
\label{thm:interval-constant}
Assume that $p \geq m^2$, then
\[
\expect{\vlbin{[m]}} = O(1).
\]
\end{theorem}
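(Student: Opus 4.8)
My plan is to reduce to the case $b = 0$, rewrite the expectation as a sum of tail probabilities over the bins, and then control the probability that a single bin is heavily loaded by extracting from such a bin a colliding sub-configuration whose probability is governed by the three-element and interval-collision estimates already proved.

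\emph{Reductions.} By Lemma~\ref{lm:b-zero} the quantities $\lbin{h_{a,0}}{[m]}$ and $\lbin{h_{a,b}}{[m]}$ differ only by a factor of two, so up to a multiplicative constant it suffices to estimate $\expect{\vlbin{[m]}}$ under the extra assumption $b = 0$, and Lemma~\ref{lemma:sign-a} lets me identify $a$ with $p - a$ and restrict to a half of the range. I would first dispose of the small range $a < p/m$: there $a(m-1) < p$, so no leap occurs on $[m]$ and $h_{a,0}$ is literally the progression $x \mapsto ax \bmod m$, whose maximal load equals $\gcd(a \bmod m, m)$; since this range has probability only $O(1/m)$ while the load never exceeds $m$, its contribution to the expectation is $O(1)$ (in fact $o(1)$, as the average of $\gcd(a \bmod m,m)$ is $O(1)$).

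\emph{Tail sum and core estimate.} For the generic range I would write
\[ \expect{\vlbin{[m]}} = \sum_{k \ge 1} \prob{\vlbin{[m]} \ge k} \le 1 + \sum_{k \ge 2}\ \sum_{i \in [m]} \prob{\vbin{[m]}{i} \ge k}, \]
using a union bound over the $m$ bins, so that the whole theorem reduces to a bound on $\prob{\vbin{[m]}{i} \ge k}$ that decays fast enough in $k$ to make the double sum $O(1)$. Here I would exploit the rigid arithmetic of $[m]$ under a linear map: if bin $i$ contains $x_1 < \cdots < x_k$, then all the values $ax_j \bmod p$ are congruent modulo $m$, which forces every consecutive gap $\delta_j = x_{j+1} - x_j$ to satisfy $a\delta_j \bmod p \equiv 0$ or $\equiv p \pmod m$; thus a loaded bin is a long chain in $[m]$ all of whose steps lie in a restricted set of ``collision gaps.'' From such a chain I would extract a controlled colliding configuration — either a short consecutive sub-interval, to which Corollary~\ref{co:d-elements} applies with its $\tfrac{1}{(d-1)m}$ decay, or a triple $\{x_1, x_2, x_k\}$, whose collision probability is pinned down by Lemma~\ref{lemma:probability-3-elements} through the Transformation Lemma~\ref{lemma:transformation} — and then sum the resulting probabilities over $i$ and over the admissible gap patterns.

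\emph{Main obstacle.} The hard part is exactly this passage from low-order collision information to the full load distribution. Bounded pair-collision probability (two-wise independence) alone yields only $\expect{\vlbin{[m]}} = O(\sqrt m)$, and even the sharp three-element estimate, used through the third moment $\sum_i \binom{\vbin{[m]}{i}}{3}$, gives only $O(m^{1/3})$, because for the interval the expected number of colliding triples is still $\Theta(m)$. Reaching $O(1)$ cannot come from any fixed moment; it must use the three-distance-type regularity of the dilated interval $a\cdot[m] \bmod p$ to show that a bin of load $k$ forces a collision event whose probability is genuinely small in $k$ — for instance a colliding consecutive interval of length $\Omega(k)$, bounded by $O(1/(km))$ via Corollary~\ref{co:d-elements}, or $\Omega(k)$ essentially disjoint small-gap triples with jointly tiny probability. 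Making this extraction precise, and checking that the induced sum over bins and gap patterns collapses to $O(1)$ rather than to a power of $m$, is the crux of the theorem.
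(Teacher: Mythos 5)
Your preparatory steps are sound and match the paper's (reduction to $b=0$ via Lemma~\ref{lm:b-zero}, the observation that consecutive gaps $\delta_j$ in a loaded bin satisfy $a\delta_j \bmod p \equiv 0$ or $p \pmod m$, and the plan to bound the collapse of a sub-progression via the Transformation Lemma and Corollary~\ref{co:d-elements}). But the skeleton you actually commit to --- the tail sum with a union bound over the $m$ bins, reducing everything to a bound on $\prob{\vbin{[m]}{i} \geq k}$ --- cannot be completed, no matter how sharp the per-bin bound is. Already at $k=2$, the inner sum $\sum_{i \in [m]} \prob{\vbin{[m]}{i} \geq 2}$ equals the expected number of bins containing a collision, and for $S=[m]$ this is $\Theta(m)$, not $O(1)$: for every difference $d$ there are $\Theta(p/m)$ values of $a$ for which $ad \bmod p$ is a small multiple of $m$ (or close to $p$ by one), and each such $a$ makes $\Theta(m)$ pairs $(x, x+d)$ collide simultaneously, in mostly distinct bins; summing over $d$ confirms that the expected number of colliding pairs is $\Theta(m)$, as your own remark about the second moment already implies. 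The same phenomenon persists for every fixed $k$: one unlucky $a$ creates $\Theta(m/k)$ bins of load $\geq k$ at once, so $\expect{\#\{i : \vbin{[m]}{i} \geq k\}}$ carries a factor of $m$ and your double sum is $\Theta(m)$. The union over bins (equivalently, over starting positions of the colliding progression) is thus irreparably lossy, and the step you defer to the end as ``the crux'' is not a technical detail but the entire content of the theorem --- which, by your own admission, the proposal does not supply.

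The paper closes exactly this gap by unioning over the \emph{cause} rather than the \emph{effect}. Since for $b=0$ every bin is a single arithmetic progression, for each difference $d$ and length $l$ the paper fixes one canonical progression (started at $\operatorname{argmin}_{x \in [m-ld]} ax \bmod p$, or the $\operatorname{argmax}$ when $ad \bmod p \geq p/2$) and argues that if \emph{any} bin of size $\geq l$ with difference $d$ exists, then the canonical progression of that difference collapses too. This replaces a union over $m$ positions by a union over at most $m/l$ differences; combined with the $O(1/(lm))$ collapse probability for a length-$l$ progression (Transformation Lemma plus Corollary~\ref{co:d-elements}), it yields $\prob{\vlbin{[m]} \geq l} = O(l^{-2})$, whence the expectation telescopes to $O(1)$. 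In short: your extraction step should not produce a small-probability event per bin, but a single representative event per gap size, precisely so that the count of events shrinks proportionally to $1/l$ while each probability shrinks like $1/(lm)$; without that canonicalization, the sum over positions reintroduces the factor of $m$ that your plan cannot remove. (Your side remarks are fine but inessential: the case $a < p/m$ needs no separate treatment in the paper's argument, and the sign symmetry of Lemma~\ref{lemma:sign-a} is never needed for the theorem.)
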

\begin{proof}
By Lemma~\ref{lm:b-zero} we may assume that the chosen function has $b = 0$ without asymptotically increasing the expected size of the largest bin. In the proof of the claims we thus assume that the chosen linear function is exactly the function $h_{a, 0}$. Moreover we assume that $a \neq 0$. Notice that this assumption adds exactly $m/p$ to the computed expected value which is $O(1)$.

Observe that each bin is formed by a single arithmetic progression. Notice that since $p$ is a prime it holds that $(-p) \bmod m$ is co-prime with $m$.
The reason can be stated as follows.
Let $x_1 < x_2$ be two elements in a single bin, then for $d = x_2 - x_1$ it holds that $m \divides ad \bmod p$ or $m \divides p - (ad \bmod p)$.

All the solutions of the equation $ax \bmod p \bmod m = 0$ where $x \in [m]$ form a finite arithmetic progression.
For the proof of the previous statement notice that since $p$ is a prime it holds that $(-p) \bmod m$ is co-prime with $m$. 

In addition a difference $d$ and a given length $l$, $l \geq 3$, there is a canonical value $x \in [m]$ such that if there is a bin of size at least $l$, then there is another bin formed by an arithmetic progression of length at least $l$ with the same difference $d$ having $x$ as the minimal element. If $ad \bmod p < p/2$, we choose $x = \operatorname{argmin}_{x \in [m - ld]} ax \bmod p$. Otherwise we put $x = \operatorname{argmax}_{x \in [m - ld]} ax \bmod p$.

After establishing the previous facts we simply compute the expected value of $\vlbin{[m]}$ using the following idea. Now we allow $b$ to have arbitrary value.

Assume that $\vlbin{[m]} > l \geq 3$, then there is an arithmetic progression chosen from $[m]$ of size at least $l/2$ collapsing into a single bin, here we use Lemma~\ref{lm:b-zero}. Since for a fixed difference and length we have its canonical position there are at most $m/l$ possible arithmetic progressions from which we choose from. By Corollary~\ref{co:d-elements} we upper bound the probability of the collapse of the arithmetic progression as 
\[
\frac{m}{l} \left(\frac{1}{(l/2 - 1)m} + 1/m^2 + O(p^{-1})\right) \leq O(l^{-2}).
\]

Hence for $l \geq 3$ we have
\[
\prob{\vlbin{[m]} \geq l} = O(l^{-2}).
\]

Then we simply conclude that
\[
\expect{\vlbin{[m]}} \leq O(1) + \sum_{l = 1}^m O\left(\frac{1}{l^2}\right) = O(1).
\]

\end{proof}

We can conclude the main result, i.e. each set transformable to $[m]$ in $\mathbb{Z}_p$ has constant sized largest bins.
\begin{corollary}
Let $S \subseteq [p]$, $a, b \in [p]$. 
If $\forall x \in [m] \colon (ax + b) \bmod p \in S$, then $\expect{\vlbin{S}} = O(1)$.
\end{corollary}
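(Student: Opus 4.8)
The plan is to reduce the corollary directly to Theorem~\ref{thm:interval-constant} by exploiting the algebraic invariance of the family $\hlin$ under post-composition with a fixed linear map --- precisely the one-to-one correspondence already verified inside the proof of Lemma~\ref{lemma:transformation}, but now applied to the whole bin structure rather than to three elements.

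First I would dispose of the degenerate case $a = 0$: there the map $x \mapsto (ax + b) \bmod p$ is constant, so the hypothesis only forces $b \in S$ and carries no information about a transformed interval; since $|S| = m$ we may take $a \neq 0$. With $a \neq 0$ the map $h'_{a, b}$ is a bijection of $[p]$, so its image $\{(ax + b) \bmod p : x \in [m]\} = h'_{a, b}([m])$ has exactly $m$ elements. The hypothesis places this image inside $S$, and $|S| = m$, so in fact $S = h'_{a, b}([m])$ and $x \mapsto h'_{a, b}(x)$ is a bijection of $[m]$ onto $S$.

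Next I would establish the bin correspondence. Fix a function $h_{c, d} \in \hlin$ and an element $s = h'_{a, b}(x) \in S$. Using $h'_{c, d} \circ h'_{a, b} = h'_{ca,\, cb + d}$ (composition of linear maps in $\mathbb{Z}_p$, exactly as in the proof of Lemma~\ref{lemma:transformation}) and reducing modulo $m$, one gets $h_{c, d}(s) = h_{ca,\, cb + d}(x)$. Hence under the bijection $x \leftrightarrow s$ the preimage $S \cap h_{c, d}^{-1}(i)$ is mapped bijectively from $[m] \cap h_{ca,\, cb + d}^{-1}(i)$ for every $i$, so corresponding bins have equal sizes. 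Taking the maximum over $i$ gives the key identity
\[
\lbin{h_{c, d}}{S} = \lbin{h_{ca,\, cb + d}}{[m]}.
\]

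Finally I would average over the uniform choice of $(c, d) \in [p]^2$. Because $a \neq 0$, the reindexing $(c, d) \mapsto (ca,\, cb + d)$ is a bijection of $[p]^2$ --- this is literally the one-to-one map checked in the proof of Lemma~\ref{lemma:transformation} with $(\alpha, \beta) = (a, b)$ --- so summing the identity above over all $(c, d)$ and relabeling yields $\expect{\vlbin{S}} = \expect{\vlbin{[m]}}$, which is $O(1)$ by Theorem~\ref{thm:interval-constant}. I do not expect a genuine obstacle here beyond bookkeeping; the two points that need care are the preliminary reduction that the hypothesis together with $|S| = m$ forces $S$ to be exactly the transformed interval, and the verification that the bin correspondence is both size-preserving (it is, since $x \mapsto s$ is a bijection) and that the reindexing preserves the uniform measure (it does, being a bijection of $[p]^2$).
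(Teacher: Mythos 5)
Your proposal is correct and takes essentially the same route as the paper: the paper's one-line proof invokes Lemma~\ref{lemma:transformation} ``extended to all the elements of $S$'' together with Theorem~\ref{thm:interval-constant}, which is exactly the composition identity $h'_{c, d} \circ h'_{a, b} = h'_{ca,\, cb + d}$ and the measure-preserving reindexing $(c, d) \mapsto (ca,\, cb + d)$ that you spell out. Your write-up is in fact more careful than the paper's sketch, since you make the bin-size identity $\lbin{h_{c, d}}{S} = \lbin{h_{ca,\, cb + d}}{[m]}$ explicit and dispose of the degenerate case $a = 0$, under which the corollary as literally stated would be vacuous and which the paper silently ignores.
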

\begin{proof}
Direct corollary of Theorem~\ref{thm:interval-constant} since by Lemma~\ref{lemma:transformation} (extended to all the elements of $S$) the probabilistic properties of $S$ do not change under the transformation $x \mapsto (ax + b) \bmod p$.
\end{proof}

\end{document}